\documentclass{article}

\usepackage{arxiv}

\usepackage[utf8]{inputenc} 
\usepackage[T1]{fontenc}    
\usepackage{hyperref}       
\usepackage{url}            
\usepackage{booktabs}       
\usepackage{amsfonts}       
\usepackage{nicefrac}       
\usepackage{microtype}      
\usepackage{lipsum}
\usepackage{graphicx}
\graphicspath{ {./images/} }
\usepackage{amssymb}
\usepackage{amsmath}

\usepackage{subcaption}
\usepackage{xcolor}
\usepackage{tabularx}
\usepackage[ruled,linesnumbered]{algorithm2e}
\usepackage{amsthm}
\newtheorem{lemma}{Lemma}[section]       
\theoremstyle{definition}                
\usepackage{nomencl}
\makenomenclature
\usepackage{etoolbox}
\renewcommand\nomgroup[1]{%
	\item[\bfseries
	\ifstrequal{#1}{R}{Roman Symbols}{%
		\ifstrequal{#1}{G}{Greek Symbols}{%
			\ifstrequal{#1}{A}{Abbreviations}{%
				\ifstrequal{#1}{S}{Subscripts}}}}%
	]}

\title{Analytical Extraction of Conditional Aleatory Sensitivities Across Epistemic Space via a Single PCE Model}

\author{
	Shijie Zhong \\
	School of Power and Energy\\
	Northwestern Polytechnical University\\
	Xi'an, Shanxi 710129 \\
	\texttt{zhongsj@mail.nwpu.edu.cn} \\
	\And
	Huiyou Tan \\
    School of Power and Energy\\
    Northwestern Polytechnical University\\
    Xi'an, Shanxi 710129 \\
    \texttt{tanhy@mail.nwpu.edu.cn} \\
	\And
	Jiangfeng Fu \\
	School of Power and Energy\\
	Northwestern Polytechnical University\\
	Xi'an, Shanxi 710129 \\
	\texttt{fjf@nwpu.edu.cn} \\
}

\begin{document}
\maketitle
\begin{abstract}
	In hybrid uncertainty quantification, evaluating how aleatory sensitivities vary under epistemic uncertainty, referred to as conditional Sobol' indices, is typically hindered by the computationally expensive double-loop procedure. Classical Polynomial Chaos Expansion (PCE) provides efficient access to global sensitivity measures but cannot directly resolve sensitivity variation across the epistemic space without repeated surrogate reconstruction. This study proposes a unified Bayesian framework that extracts continuous conditional Sobol' fields from a single global PCE representation. By exploiting the tensor-product structure of orthogonal polynomial bases in an augmented stochastic space, the global expansion is analytically decomposed into epistemic-dependent coefficient fields, enabling a closed-form variance decomposition. As a result, conditional Sobol' indices can be computed through a purely algebraic post-processing step without additional model evaluations or retraining. In addition, a Reversible Jump Markov Chain Monte Carlo (RJMCMC) scheme is incorporated to perform adaptive basis selection and trans-dimensional inference, while simultaneously providing Bayesian credible intervals for the conditional sensitivity measures. Numerical experiments on a high-dimensional groundwater flow model demonstrate that the proposed method significantly reduces computational cost while maintaining smooth sensitivity fields and statistically consistent uncertainty quantification across the epistemic domain.
\end{abstract}

\section{Introduction}

Hybrid uncertainty arises from the presence of both aleatory and epistemic uncertainties, requiring rigorous uncertainty quantification frameworks \cite{der2009aleatory, hoffman1994propagation}. To avoid misleading reliability estimates caused by mixing these two sources, non-probabilistic frameworks such as interval analysis, evidence theory, and probability boxes (p-boxes) have been developed to enable robust assessments under incomplete information \cite{klir2003uncertainty, ferson2003constructing}. However, propagating hybrid uncertainties in complex engineering systems is often computationally expensive due to double-loop procedures. To reduce this computational cost, surrogate models and multi-fidelity techniques have been widely used for reliability analysis and design optimization \cite{liu2024hybrid}.

Given such hybrid representations, Global Sensitivity Analysis (GSA), particularly variance-based Sobol' indices, is widely used to quantify the contribution of input uncertainties to output variability \cite{sobol1993sensitivity, angus1994probability}. Classical GSA, however, is formulated for scalar outputs under fully probabilistic assumptions, whereas hybrid uncertainty problems often involve outputs that depend on epistemic parameters. In such cases, the sensitivity of aleatory variables becomes conditional on the epistemic state and may vary throughout the epistemic space. Although recent surrogate-based and augmented-space methods have significantly improved the efficiency of hybrid uncertainty analysis and can provide global or aggregated Sobol' indices under hybrid uncertainties \cite{liu2025efficient, luo2025unified}, they do not explicitly represent conditional variance decompositions. As a result, continuous conditional Sobol' fields are generally not available in analytical form, and a framework that directly connects hybrid uncertainty representations with conditional sensitivity measures is still lacking.

Built on decoupled or fused parameter representations, Polynomial Chaos Expansion (PCE) has become a widely adopted method for hybrid uncertainty quantification due to its solid mathematical foundation \cite{ghanem2003stochastic}. Following early efforts that incorporated interval-based epistemic uncertainties into unified PCE formulations \cite{monti2009extending}, recent studies have reduced the curse of dimensionality through dimension fusion and polymorphic variables \cite{yusuf2021polymorphic, prasad2017novel}. By representing mixed uncertainties with a single variable, these approaches successfully yield highly compact surrogate models.

However, this mathematical compression introduces several critical limitations. First, parameter fusion inherently obscures the distinction between epistemic and aleatory uncertainties, hindering the interpretation of their individual contributions. Second, the resulting mixed variables often follow nonstandard probability distributions that are incompatible with the classical orthogonal polynomial framework (the Askey scheme \cite{xiu2002wiener}), requiring the cumbersome numerical construction of custom orthogonal bases.

Consequently, maintaining a decoupled parameter representation remains highly desirable to preserve both physical interpretability and mathematical elegance. A mathematically rigorous approach to achieve this is through the Probability Integral Transform (PIT), which introduces an independent auxiliary variable to isolate pure aleatory variability from epistemic parameter uncertainty \cite{sankararaman2013separating}. Yet, even within such PIT-based decoupled frameworks, existing PCE formulations are not designed for the analytical construction of continuous conditional Sobol' fields. Since conditional sensitivity measures are not explicitly represented in the resulting expansions, they must typically be recovered through computationally expensive repeated evaluations or brute-force sampling across the epistemic space.

Furthermore, reverting to a decoupled representation inevitably reintroduces the curse of dimensionality. Although deterministic sparse algorithms such as Orthogonal Matching Pursuit \cite{tropp2007signal} and Least Angle Regression \cite{Blatman2011AdaptiveSP} improve the scalability of high-dimensional PCE, they provide only point estimates of the expansion coefficients due to their deterministic basis selection and least-squares formulation. Bayesian sparse PCE \cite{shao2017bayesian} offers an alternative by casting coefficient estimation as a probabilistic inference problem. More recently, Reversible Jump Markov Chain Monte Carlo (RJMCMC)-based formulations have further advanced this framework \cite{rumsey2026bayesian}, in which both the model structure and coefficients are treated as unknowns. This enables simultaneous basis selection and parameter inference, yielding full posterior distributions and rigorous credible intervals for both the surrogate model and the resulting conditional Sobol' indices.

To address these limitations, this paper proposes a unified Bayesian framework for the analytical construction of continuous conditional Sobol' fields under hybrid uncertainty. An inverse probability integral transform is first used to map the hybrid uncertainty space into a decoupled latent space, preserving the tensor-product structure of the polynomial basis. Within this representation, conditional Sobol' indices are shown to be directly recoverable from the coefficients of a global PCE surrogate through purely algebraic operations. This enables the construction of continuous conditional sensitivity fields without additional model evaluations or surrogate reconstruction. The framework is further combined with an RJMCMC-based sparse PCE formulation to quantify surrogate-induced uncertainty and provide posterior estimates with Bayesian credible intervals. Numerical examples, including a groundwater flow model, demonstrate the accuracy, robustness, and efficiency of the proposed approach.

The remainder of this paper is organized as follows. Section \ref{sec2} introduces the theoretical background, including hybrid uncertainty modeling via the probability integral transform, the fundamentals of Polynomial Chaos Expansion (PCE), and the trans-dimensional Bayesian inference framework. Section \ref{sec3} presents the core methodology, detailing the mathematical proofs for the analytical extraction of conditional Sobol' indices via basis decomposition, along with the complete algorithmic implementation using RJMCMC-PCE. In Section \ref{sec4}, the accuracy, robustness, and efficiency of the proposed framework are demonstrated through a rigorous analytical benchmark and a high-dimensional borehole groundwater flow model.

\section{Backgroud}
\label{sec2}
\subsection{Hybrid Uncertainty Modeling Based on the Probability Integral Transform}

In practical engineering systems, uncertainties are described within a fixed model universe consisting of basic random variables, probabilistic sub-models, and physical response models. Within this framework, it is convenient to distinguish between aleatory and epistemic uncertainty, where the distinction depends on how the model is constructed rather than on intrinsic properties of the physical world.

Let \( \theta \in \Theta \subset \mathbb{R}^d \) denote uncertain parameters associated with the probabilistic description of the system, reflecting incomplete knowledge in the modeling process. Conditional on \( \theta \), the system input is represented by a probabilistic model of the form
\begin{equation}
	X \sim p(X \mid \theta),
\end{equation}
where the distribution is interpreted as part of the chosen model structure within the model universe.

Within the same framework, the system response is defined through a deterministic physical model
\begin{equation}
	Y = g(X),
\end{equation}
where all uncertainty is embedded in the probabilistic characterization of \(X\) and the uncertainty in \(\theta\). In this representation, aleatory uncertainty refers to the variability of \(X\) under a fixed model specification, while epistemic uncertainty corresponds to the uncertainty in the model parameters \(\theta\), which may in principle be reduced through additional information or model refinement \cite{der2009aleatory}.

In the context of epistemic uncertainty, a key idea is that when information is incomplete or data are limited, probabilities do not need to be represented as single precise values. Instead, they can be modeled as intervals or sets, where uncertainty is expressed through lower and upper bounds or a family of admissible probability distributions, reflecting incomplete knowledge about the true probability measure \cite{walley1991statistical}.

At this stage, a useful interpretation emerges: the separation between aleatory variability and epistemic uncertainty can be further formalized through the introduction of an auxiliary variable, which enables a fully deterministic reformulation of the stochastic system\cite{sankararaman2013separating}. In this context, the auxiliary variable is introduced to explicitly represent the intrinsic variability of the input random variable while decoupling it from distributional parameter uncertainty. For each uncertain input variable $X$, the probability integral transform establishes a one-to-one mapping between realizations of $X$ and a standard uniform variable $\xi \sim \mathcal{U}(0,1)$, such that
\begin{equation}
	\xi = F_X(X \mid \theta).
\end{equation}

Consequently, the input can be equivalently expressed as
\begin{equation}
	X = F_X^{-1}(\xi \mid \theta),
\end{equation}
where $\xi$ serves as an auxiliary variable that is statistically independent of the epistemic parameters $\theta$. This transformation provides two important advantages: (i) it yields an explicit representation of the aleatory variability via a distribution-free latent variable $\xi$, and (ii) it ensures a deterministic mapping from the joint space $(\xi, \theta)$ to the system response.

In this formulation, the overall uncertainty in $X$ is naturally decomposed into two components: the intrinsic variability captured by $\xi$, and the epistemic uncertainty associated with the parameter vector $\theta$. This separation aligns with the classical double-loop interpretation in uncertainty propagation, while simultaneously enabling a unified deterministic representation of the model response. As a result, the system output can be written as
\begin{equation}
	Y = g\!\left(F_X^{-1}(\xi \mid \theta)\right) = h(\xi,\theta),
\end{equation}
which forms the basis for global sensitivity analysis and uncertainty decomposition.

\subsection{Theoretical foundation of PCE}
Consider an $M$-dimensional random vector $\boldsymbol{X}$, whose components are mutually independent and have a joint probability density function $f_{\boldsymbol{X}}$. Let the computational model $\mathcal{M}(\cdot)$ map the random vector of the input to the random variable of the output $Y = \mathcal{M}(\boldsymbol{X})$. Assuming $Y$ has a finite variance, based on PCE, $Y$ can be represented as:

\begin{equation}
	\label{PCE}
	Y=\mathcal{M} \left( \boldsymbol{X} \right) =\sum_{\boldsymbol{\alpha }\in \mathbb{N} ^M}{y_{\boldsymbol{\alpha }}\Psi _{\boldsymbol{\alpha }}\left( \boldsymbol{X} \right)}
\end{equation}
where $\boldsymbol{\alpha}$ is the multi-index, $y_{\boldsymbol{\alpha}}$ are the corresponding expansion coefficients, and $\Psi_{\boldsymbol{\alpha}}(\boldsymbol{X})$ are orthonormal multivariate polynomials with respect to $f_{\boldsymbol X}$. These polynomials are typically constructed by the tensor product of one-dimensional orthonormal polynomials:
\begin{equation}
	\Psi_{\boldsymbol{\alpha}}(\boldsymbol{X}) = \prod_{i=1}^M \phi_{\alpha_i}(x_i)
\end{equation}
which satisfy the orthonormality condition $\left\langle \Psi_{\boldsymbol{\alpha}}, \Psi_{\boldsymbol{\beta}} \right\rangle = \delta_{\boldsymbol{\alpha} \boldsymbol{\beta}}$, where $\delta_{\boldsymbol{\alpha} \boldsymbol{\beta}}$ is the multi-dimensional Kronecker symbol.

For uniformly distributed input variables defined on $[-1, 1]$, the one-dimensional polynomial basis $\phi_k(x)$ is constructed based on the corresponding probability density function using the standard orthogonal Legendre polynomials \cite{xiu2002wiener}. These basis functions are defined as:

\begin{equation}
	\phi_k(x) = \frac{P_k(x)}{\sqrt{\gamma_k}}
\end{equation}

where $P_k(x)$ is the $k$-th order Legendre polynomial, and $\gamma_k = 1 / (2k+1)$ is the corresponding normalization factor. This ensures the orthonormality condition $\left\langle \phi_i(x), \phi_j(x) \right\rangle = \delta_{ij}$.

Furthermore, the standard orthogonal Legendre polynomials can be calculated using the following recurrence formula:

\begin{equation}
	(k+1)P_{k+1}(x) = (2k+1)xP_k(x) - kP_{k-1}(x)
\end{equation}

In practical applications, for computational convenience, the PCE is usually limited such that the total polynomial order does not exceed $p$:
\begin{equation}
	\mathcal{A} ^{M,p}=\left\{ \boldsymbol{\alpha }\in \mathbb{N} ^M:\left| \boldsymbol{\alpha } \right|\leqslant p \right\}
\end{equation}

The number of terms, $P$, in the truncated series is:
\begin{equation}
	\label{Card}
	P=\mathrm{card}\mathcal{A} ^{M,p}=\left( \begin{array}{c}
		M+p\\
		p\\
	\end{array} \right) 
\end{equation}

The truncated PCE is then expressed as:
\begin{equation}
	\label{tunPCE}
	\mathcal{M} ^{PC}\left( \boldsymbol{X} \right) =\sum_{\boldsymbol{\alpha }\in \mathcal{A} ^{M,p}}{y_{\boldsymbol{\alpha }}\Psi _{\boldsymbol{\alpha }}\left( \boldsymbol{X} \right)}
\end{equation}

Consider the truncated PCE representation of a model output Eq.~\ref{tunPCE}, the unique Sobol' decomposition characterizes the model response as a sum of functions of increasing dimensions \cite{sudret2008global, xiu2010numerical}. 
	
The PCE terms can be reorganized according to the subset of input variables they involve. For any non-empty subset $\mathbf{u} \subset \{1, \dots, M\}$, define the associated index set:
\begin{equation}
	\mathcal{A}_{\mathbf{u}} = \left\{ \boldsymbol{\alpha} \in \mathcal{A} : \alpha_k \neq 0 \iff k \in \mathbf{u} \right\}
\end{equation}
	
Using this partition, the truncated PCE can be rewritten as a sum of contributions associated with different variable subsets:
\begin{equation}
	\mathcal{M} ^{PC} = y_{\mathbf{0}} + \sum_{\substack{\mathbf{u} \subset \{1,\dots,M\} \\ \mathbf{u} \neq \emptyset}} 
	\sum_{\boldsymbol{\alpha} \in \mathcal{A}_{\mathbf{u}}} 
		y_{\boldsymbol{\alpha}} \, \Psi_{\boldsymbol{\alpha}}(\boldsymbol{\xi})
\end{equation}
	
Due to the orthonormality of the polynomial basis, the total variance of the PCE model and the partial variance associated with a subset of variables $\mathbf{u}$ can be computed directly from the expansion coefficients as:
\begin{equation}
	\mathrm{Var}[\mathcal{M} ^{PC}] = \sum_{\substack{\boldsymbol{\alpha} \in \mathcal{A} \\ \boldsymbol{\alpha} \neq \mathbf{0}}} y_{\boldsymbol{\alpha}}^2
\end{equation}
\begin{equation}
	\mathrm{Var}[\mathcal{M} ^{PC}_{\mathbf{u}}] = \sum_{\boldsymbol{\alpha} \in \mathcal{A}_{\mathbf{u}}} y_{\boldsymbol{\alpha}}^2
\end{equation}
	
The Sobol' index corresponding to $\mathbf{u}$ is therefore obtained as:
\begin{equation}
	S_{\mathbf{u}} = \frac{V_{\mathbf{u}}}{V} 
	= \frac{\sum_{\boldsymbol{\alpha} \in \mathcal{A}_{\mathbf{u}}} y_{\boldsymbol{\alpha}}^2}
	{\sum_{\substack{\boldsymbol{\alpha} \in \mathcal{A} \\ \boldsymbol{\alpha} \neq \mathbf{0}}} y_{\boldsymbol{\alpha}}^2}
\end{equation}
which enables the direct computation of Sobol' indices from PCE coefficients without additional model evaluations.

While these global indices $S_{\mathbf{u}}$ provide a comprehensive overview of variable importance, they are static and cannot reflect how sensitivities vary under specific conditions or within localized sub-domains. This limitation motivates the development of the analytical conditional framework presented in the next section.

\subsection{Bayesian inference and trans-dimensional formulation}
To rigorously quantify the epistemic uncertainty and achieve adaptive sparsity, the PCE framework is cast into a fully Bayesian probabilistic model. The deterministic model evaluations are assumed to be observed with a Gaussian error term, formulating the response as:
\begin{equation}
	y_i = \mathcal{M}^{PC}(\boldsymbol{x}_i) + \epsilon_i, \quad \epsilon_i \sim \mathcal{N}(0, \sigma^2)
\end{equation}
where $\mathcal{M}^{PC}(\boldsymbol{x}) = \beta_0 + \sum_{m=1}^M \beta_m \Psi_m(\boldsymbol{x}|\boldsymbol{\alpha}_m)$ represents the truncated PCE with $M$ active basis functions, and $\sigma^2$ captures the residual variance.

The adaptive nature of the RJPCE algorithm is driven by treating the number of basis functions $M$ and their corresponding multi-indices $\mathcal{A} = \{\boldsymbol{\alpha}_1, \dots, \boldsymbol{\alpha}_M\}$ as unknown random variables. To control the model's complexity dynamically, a hierarchical Poisson prior is placed on the model dimension $M$:
\begin{equation}
	M | \lambda \sim \text{Poiss}(\lambda), \quad \lambda \sim \text{Gamma}(a_M, b_M)
\end{equation}
which naturally penalizes overly complex representations and encourages sparsity \cite{rumsey2026bayesian, francom2020bass}.

For the expansion coefficients $\boldsymbol{\beta} = (\beta_0, \beta_1, \dots, \beta_M)^\top$, a modified $g$-prior is assigned to induce complexity-dependent shrinkage. This specific prior is defined as:
\begin{equation}
	\boldsymbol{\beta} | M, \sigma^2, g_0^2 \sim \mathcal{N}_{M+1}\left(\mathbf{0}, \sigma^2 g_0^2 \boldsymbol{D}(g)(\boldsymbol{\Psi}^\top \boldsymbol{\Psi})^{-1} \boldsymbol{D}(g)\right)
\end{equation}
where $g_0^2 \sim \text{Inv-Gamma}(a_g, b_g)$ acts as a global regularizer, and $\boldsymbol{D}(g)$ is a diagonal matrix containing penalty weights $g_m$ that apply stronger regularization to basis functions with higher interaction orders and polynomial degrees \cite{zhang2015two, zellner1986assessing}. The residual variance is assigned a standard conjugate prior $\sigma^2 \sim \text{Inv-Gamma}(a_\sigma, b_\sigma)$ \cite{hoff2009first}.

Since the model dimension $M$ varies during the learning process, standard Markov Chain Monte Carlo methods are insufficient. A Reversible Jump MCMC (RJMCMC) algorithm is utilized to explore the trans-dimensional parameter space through structural proposals, including birth, death, and mutations \cite{green2001delayed}. Suppose the chain proposes a move from the current model $\mathcal{M}_{cur}$ to a candidate model $\mathcal{M}_{cand}$ via move type $X$. The proposal is accepted with probability $\min(1, \alpha_X)$, where the log-acceptance ratio is given by:
\begin{equation}
	\log \alpha_X = \log\left(\frac{p(\boldsymbol{y}|\mathcal{M}_{cand})}{p(\boldsymbol{y}|\mathcal{M}_{cur})}\right) + \log\left(\frac{p(\mathcal{M}_{cand})}{p(\mathcal{M}_{cur})}\right) + \log A_X
\end{equation}
These three terms represent the log-marginal likelihood ratio, the log-prior ratio of the model structures, and the asymmetric proposal probability ratio $A_X$ associated with the specific move type, respectively.

Conditioned on the accepted model structure at each RJMCMC iteration, the continuous parameters are updated using exact conjugate Gibbs sampling. For instance, the full conditional posterior for the regression coefficients follows a multivariate normal distribution:
\begin{equation}
	\boldsymbol{\beta} | \cdot \sim \mathcal{N}(\boldsymbol{\mu}_n, \sigma^2 \boldsymbol{\Sigma}_n)
\end{equation}
where the posterior covariance is $\boldsymbol{\Sigma}_n = (\boldsymbol{\Psi}^\top \boldsymbol{\Psi} + \boldsymbol{S}_0^{-1})^{-1}$ and the posterior mean is $\boldsymbol{\mu}_n = \boldsymbol{\Sigma}_n \boldsymbol{\Psi}^\top \boldsymbol{y}$, with $\boldsymbol{S}_0$ derived directly from the modified $g$-prior. This analytical tractability within fixed dimensions ensures efficient chain mixing and robust posterior sampling.

\section{Analytical conditional Sobol' analysis via basis decomposition}
\label{sec3}
In hybrid uncertainty quantification, the model response is parameterized by a subset of epistemic variables $\boldsymbol{\theta}$ across a continuous parameter space. Rather than constructing independent surrogates for each discrete epistemic realization, a more efficient and rigorous approach is to treat both the epistemic variables $\boldsymbol{\theta}$ and the aleatory random vector $\boldsymbol{\xi}$ (or their mapped equivalents in the latent space, $\boldsymbol{u}_\Theta$ and $\boldsymbol{u}_\xi$) as components of a joint augmented input vector $\boldsymbol{X}=\left( \boldsymbol{\theta}, \boldsymbol{\xi} \right)^\mathrm{T}$. By doing so, a unified PCE model can be constructed, providing a continuous and integrated functional representation of the system response over the entire hybrid uncertainty domain.

\subsection{Analytical extraction of conditional Sobol' indices}

To extract the conditional sensitivity structure with respect to any specific epistemic condition $\boldsymbol{\theta}$, the multi-index $\boldsymbol{\alpha}$ is partitioned into two disjoint subsets: $\boldsymbol{\alpha}_K$, which clusters indices related to the epistemic parameters $\boldsymbol{\theta}$; and $\boldsymbol{\alpha}_L$, which clusters the indices associated with the aleatory variables $\boldsymbol{\xi}$. Leveraging the tensor-product nature of the PCE basis functions, the joint basis $\Psi_{\boldsymbol{\alpha}}(\boldsymbol{X})$ can be analytically factorized into a conditional form:
\begin{equation}
	\Psi _{\boldsymbol{\alpha }}\left( \boldsymbol{X} \right) =\Psi _{\boldsymbol{\alpha }}\left( \boldsymbol{\theta},\boldsymbol{\xi } \right) =\Psi _{\boldsymbol{\alpha }_K}\left( \boldsymbol{\theta} \right) \Psi _{\boldsymbol{\alpha }_L}\left( \boldsymbol{\xi } \right)
\end{equation}
By regrouping terms based on the stochastic multi-indices $\boldsymbol{\alpha}_L$, the global PCE model is reformulated into an analytical conditional representation. This transformation reveals that the influence of the epistemic conditions can be entirely encapsulated within a new set of varying coefficients. The formal mathematical properties of this decomposition are summarized in the following lemma.

\begin{lemma}[Orthogonality preservation for analytical conditional PCE]
	\label{lem:ortho}
	Given the basis factorization $\Psi_{\boldsymbol{\alpha}}(\boldsymbol{\theta}, \boldsymbol{\xi}) = \Psi_{\boldsymbol{\alpha}_K}(\boldsymbol{\theta}) \Psi_{\boldsymbol{\alpha}_L}(\boldsymbol{\xi})$, the global PCE model can be analytically reformulated into a conditional expansion:
	\begin{equation}
		\mathcal{M}^{PC}(\boldsymbol{\xi} \mid \boldsymbol{\theta}) = \sum_{\boldsymbol{\alpha}_L} c_{\boldsymbol{\alpha}_L}(\boldsymbol{\theta}) \Psi_{\boldsymbol{\alpha}_L}(\boldsymbol{\xi})
	\end{equation}
	where the parametric coefficient fields $c_{\boldsymbol{\alpha}_L}(\boldsymbol{\theta})$ are defined as:
	\begin{equation}
		c_{\boldsymbol{\alpha}_L}(\boldsymbol{\theta}) = \sum_{\boldsymbol{\alpha}_K} y_{\boldsymbol{\alpha}_K, \boldsymbol{\alpha}_L} \Psi_{\boldsymbol{\alpha}_K}(\boldsymbol{\theta})
	\end{equation}
	Crucially, for any fixed value of the epistemic variables $\boldsymbol{\theta}$, the basis functions $\Psi_{\boldsymbol{\alpha}_L}(\boldsymbol{\xi})$ preserve their orthonormality with respect to the aleatory random vector $\boldsymbol{\xi}$.
\end{lemma}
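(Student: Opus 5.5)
The plan is a direct algebraic regrouping followed by an orthonormality computation under the conditional (aleatory) measure. The one structural assumption I will make explicit is that the augmented input $\boldsymbol{X}=(\boldsymbol{\theta},\boldsymbol{\xi})$ has independent components, so that $f_{\boldsymbol{X}}(\boldsymbol{\theta},\boldsymbol{\xi})=f_{\boldsymbol{\theta}}(\boldsymbol{\theta})f_{\boldsymbol{\xi}}(\boldsymbol{\xi})$. This is exactly what the probability integral transform of Section~\ref{sec2} provides, since $\xi$ is uniform and independent of $\theta$. Under this assumption the joint orthonormal basis is a tensor product of univariate orthonormal families in each coordinate, and the factorization $\Psi_{\boldsymbol{\alpha}}=\Psi_{\boldsymbol{\alpha}_K}\Psi_{\boldsymbol{\alpha}_L}$ holds. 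Here $\Psi_{\boldsymbol{\alpha}_K}$ and $\Psi_{\boldsymbol{\alpha}_L}$ are themselves products of the univariate $\phi$'s over the epistemic and aleatory coordinates, respectively.

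\textbf{Step 1 (reformulation).} Since the truncated set $\mathcal{A}$ is finite, I write each $\boldsymbol{\alpha}\in\mathcal{A}$ uniquely as the pair $(\boldsymbol{\alpha}_K,\boldsymbol{\alpha}_L)$. I then reorder the finite sum in Eq.~\ref{tunPCE} as an outer sum over the projected set $\mathcal{A}_L=\{\boldsymbol{\alpha}_L:\exists\,\boldsymbol{\alpha}_K,\ (\boldsymbol{\alpha}_K,\boldsymbol{\alpha}_L)\in\mathcal{A}\}$. The inner sum runs over the fibre $\{\boldsymbol{\alpha}_K:(\boldsymbol{\alpha}_K,\boldsymbol{\alpha}_L)\in\mathcal{A}\}$. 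Pulling out the common factor $\Psi_{\boldsymbol{\alpha}_L}(\boldsymbol{\xi})$ gives the stated expansion with the stated $c_{\boldsymbol{\alpha}_L}(\boldsymbol{\theta})$. Equivalently, one may set $y_{\boldsymbol{\alpha}}=0$ outside $\mathcal{A}$ and sum freely. No convergence issues arise because all sums are finite. For the sparse RJMCMC model the same argument applies verbatim with $\mathcal{A}$ replaced by the active index set.

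\textbf{Step 2 (orthonormality for fixed $\boldsymbol{\theta}$).} For fixed $\boldsymbol{\theta}$, I compute $\mathbb{E}_{\boldsymbol{\xi}}[\Psi_{\boldsymbol{\alpha}_L}\Psi_{\boldsymbol{\beta}_L}]$ with respect to the conditional density of $\boldsymbol{\xi}$ given $\boldsymbol{\theta}$. By independence this conditional density equals the marginal $f_{\boldsymbol{\xi}}$. The integral therefore does not depend on $\boldsymbol{\theta}$, and by Fubini it factorizes over the aleatory coordinates as $\prod_{j\in L}\langle\phi_{\alpha_j},\phi_{\beta_j}\rangle=\prod_{j\in L}\delta_{\alpha_j\beta_j}=\delta_{\boldsymbol{\alpha}_L\boldsymbol{\beta}_L}$, using the one-dimensional orthonormality (e.g.\ normalized Legendre) stated earlier. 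The coefficients $c_{\boldsymbol{\alpha}_L}(\boldsymbol{\theta})$ are constants with respect to $\boldsymbol{\xi}$. Hence the conditional expansion is a genuine orthonormal PCE in $\boldsymbol{\xi}$, which is what licenses the subsequent conditional variance formula $\sum_{\boldsymbol{\alpha}_L\neq\mathbf{0}}c_{\boldsymbol{\alpha}_L}(\boldsymbol{\theta})^2$.

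\textbf{Main obstacle.} The only substantive point is justifying that the conditional measure of $\boldsymbol{\xi}$ given $\boldsymbol{\theta}$ coincides with the product measure used to build the basis. If $\boldsymbol{\xi}$ depended on $\boldsymbol{\theta}$, the $\Psi_{\boldsymbol{\alpha}_L}$ would fail to be orthonormal conditionally. I will therefore state explicitly that the claim rests on the PIT construction, under which $\xi=F_X(X\mid\theta)\sim\mathcal{U}(0,1)$ for every $\theta$. This gives the independence directly, including after the affine map to $[-1,1]$ used for the Legendre basis.
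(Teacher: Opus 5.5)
Your proof is correct and takes the same route as the paper: you regroup the finite sum by $\boldsymbol{\alpha}_L$ to define $c_{\boldsymbol{\alpha}_L}(\boldsymbol{\theta})$, and then invoke the tensor-product orthonormality of $\Psi_{\boldsymbol{\alpha}_L}$ in $\boldsymbol{\xi}$. Where the paper only asserts that orthonormality is ``inherently preserved under the conditional probability measure,'' you make explicit the fact that actually justifies it: by the PIT/latent-space construction, the conditional law of $\boldsymbol{\xi}$ given $\boldsymbol{\theta}$ equals the product marginal.
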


\begin{proof}
	Starting from the PCE formulation, we substitute the factorized basis functions $\Psi_{\boldsymbol{\alpha}}(\boldsymbol{\theta}, \boldsymbol{\xi}) = \Psi_{\boldsymbol{\alpha}_K}(\boldsymbol{\theta}) \Psi_{\boldsymbol{\alpha}_L}(\boldsymbol{\xi})$ into the expansion:
	\begin{equation}
		\mathcal{M}^{PC}(\boldsymbol{\xi} \mid \boldsymbol{\theta}) = \sum_{\boldsymbol{\alpha}_L} \sum_{\boldsymbol{\alpha}_K} y_{\boldsymbol{\alpha}_K, \boldsymbol{\alpha}_L} \Psi_{\boldsymbol{\alpha}_K}(\boldsymbol{\theta}) \Psi_{\boldsymbol{\alpha}_L}(\boldsymbol{\xi})
	\end{equation}
	By invoking the linearity of the summation, the terms associated with the epistemic basis functions $\Psi_{\boldsymbol{\alpha}_K}(\boldsymbol{\theta})$ can be grouped:
	\begin{equation}
		\mathcal{M}^{PC}(\boldsymbol{\xi} \mid \boldsymbol{\theta}) = \sum_{\boldsymbol{\alpha}_L} \left[ \sum_{\boldsymbol{\alpha}_K} y_{\boldsymbol{\alpha}_K, \boldsymbol{\alpha}_L} \Psi_{\boldsymbol{\alpha}_K}(\boldsymbol{\theta}) \right] \Psi_{\boldsymbol{\alpha}_L}(\boldsymbol{\xi})
	\end{equation}
	By defining the term in the brackets as the parametric coefficient $c_{\boldsymbol{\alpha}_L}(\boldsymbol{\theta})$, we obtain the conditional form of the expansion.
	
	Regarding the orthogonality, consider the inner product with respect to the random vector $\boldsymbol{\xi}$ for a given $\boldsymbol{\theta}$. Since the functions $\Psi_{\boldsymbol{\alpha}_L}(\boldsymbol{\xi})$ are elements of the original orthonormal basis set $\mathcal{A}$ restricted to the stochastic dimensions of $\boldsymbol{\xi}$, their orthonormality is inherently preserved under the conditional probability measure:
	\begin{equation}
		\langle \Psi_{\boldsymbol{\alpha}_L}, \Psi_{\boldsymbol{\beta}_L} \rangle_{\boldsymbol{\xi}} = \delta_{\boldsymbol{\alpha}_L \boldsymbol{\beta}_L}
	\end{equation}
\end{proof}

A key property of this analytical conditional representation (Lemma~\ref{lem:ortho}) is that the orthonormality of the polynomial basis is preserved with respect to the aleatory variables $\boldsymbol{\xi}$. This enables the direct computation of statistical moments conditioned on any given epistemic state $\boldsymbol{\theta}$. Specifically, the conditional mean of the model response, which represents the deterministic trend driven by epistemic parameters, is directly given by the zero-order coefficient field:
\begin{equation}
	\mathbb{E} \left[ \mathcal{M}^{PC}(\boldsymbol{\xi} \mid \boldsymbol{\theta}) \right] = c_{\boldsymbol{0}}(\boldsymbol{\theta}) = \sum_{\boldsymbol{\alpha}_K} y_{\boldsymbol{\alpha}_K, \boldsymbol{0}} \Psi_{\boldsymbol{\alpha}_K}(\boldsymbol{\theta})
\end{equation}
Subsequently, the conditional variance, which characterizes the aleatory uncertainty magnitude at a given $\boldsymbol{\theta}$, is obtained by summing the squares of all higher-order coefficient fields:
\begin{equation}
	\mathrm{Var} \left[ \mathcal{M}^{PC}(\boldsymbol{\xi} \mid \boldsymbol{\theta}) \right]= \sum_{\boldsymbol{\alpha}_L \ne \boldsymbol{0}} c_{\boldsymbol{\alpha}_L}^{2}(\boldsymbol{\theta}),\label{SCvar}
\end{equation}
These expressions demonstrate that both the mean and variance are no longer static scalars, but are expressed as continuous, analytical functions of the epistemic variables $\boldsymbol{\theta}$.

In the framework of the analytical conditional PCE, Sobol' sensitivity indices can be derived directly from the parametric coefficient fields $c_{\boldsymbol{\alpha}_L}(\boldsymbol{\theta})$, completely bypassing the need for additional model evaluations or sampling. To obtain the conditional sensitivity distribution, we focus on the sensitivity of the response with respect to the aleatory input vector $\boldsymbol{\xi} = \{\xi_1, \dots, \xi_L\}$ conditional on $\boldsymbol{\theta}$. For any non-empty subset of indices $\mathbf{u} \subset \{1, \dots, L\}$, we define the associated conditional multi-index set:
\begin{equation}
	\mathcal{A}_{\mathbf{u}, L}^{M, p} = \{ \boldsymbol{\alpha}_L \in \mathbb{N}^L : \alpha_k \neq 0 \Longleftrightarrow k \in \mathbf{u}, \quad \text{and } (\boldsymbol{\alpha}_K, \boldsymbol{\alpha}_L) \in \mathcal{A}^{M, p}\}
\end{equation}
Based on these definitions, the analytical conditional Sobol' decomposition is formalized in the following lemma.

\begin{lemma}[Analytical conditional Sobol' decomposition]\label{lem:scsobol}
	Given the conditional PCE model $\mathcal{M}^{PC}(\boldsymbol{\xi} \mid \boldsymbol{\theta})$, for any fixed epistemic realization $\boldsymbol{\theta}$, there exists a unique Sobol' decomposition with respect to the aleatory input vector $\boldsymbol{\xi}$:
	\begin{equation}
		\mathcal{M} ^{PC}\left( \boldsymbol{\xi } \mid \boldsymbol{\theta} \right) = \mathcal{M} ^{PC}_{\boldsymbol{0}}(\boldsymbol{\theta})+\sum_{ \mathbf{u}\ne \emptyset}{\mathcal{M} ^{PC}_{\mathbf{u}}\left( \boldsymbol{\xi}_{\mathbf{u}} \mid \boldsymbol{\theta} \right)}
	\end{equation}
	where $\mathcal{M}^{PC}_{\boldsymbol{0}}(\boldsymbol{\theta})$ is the conditional mean, and each component $\mathcal{M}_{\mathbf{u}}$ is represented by the corresponding $\text{PCE}$ terms:
	\begin{equation}
		\mathcal{M} ^{PC} _{\mathbf{u}}\left( \boldsymbol{\xi }_{\mathbf{u}} \mid \boldsymbol{\theta} \right) = \sum_{\boldsymbol{\alpha }_L\in \mathcal{A} _{\mathbf{u},L}^{M,p}}{c_{\boldsymbol{\alpha }_L}(\boldsymbol{\theta})\Psi _{\boldsymbol{\alpha }_L}\left( \boldsymbol{\xi } \right)}
	\end{equation}
	
	Under the orthonormality of the polynomial chaos basis $\Psi_{\boldsymbol{\alpha}_L}$, the conditional Sobol' indices are obtained analytically from the coefficient fields as:
	
	\begin{enumerate}
		\item {First-order sensitivity index}
		\begin{equation}
			S_{i \mid \boldsymbol{\theta}} 
			= \frac{\sum_{\boldsymbol{\alpha}_L \in \mathcal{A}_{\mathbf{u},L}^{M,p}, \ \mathbf{u} = \{i\}} 
				c_{\boldsymbol{\alpha}_L}^2(\boldsymbol{\theta})}
			{\sum_{\substack{\boldsymbol{\alpha}_L \in \mathcal{A}^{M,p} \\ \boldsymbol{\alpha}_L \neq \boldsymbol{0}}} 
				c_{\boldsymbol{\alpha}_L}^2(\boldsymbol{\theta})}
		\end{equation}
		
		\item {Interaction sensitivity index}
		\begin{equation}
			S_{\mathbf{u} \mid \boldsymbol{\theta}} 
			= \frac{\sum_{\boldsymbol{\alpha}_L \in \mathcal{A}_{\mathbf{u},L}^{M,p}} 
				c_{\boldsymbol{\alpha}_L}^2(\boldsymbol{\theta})}
			{\sum_{\substack{\boldsymbol{\alpha}_L \in \mathcal{A}^{M,p} \\ \boldsymbol{\alpha}_L \neq \boldsymbol{0}}} 
				c_{\boldsymbol{\alpha}_L}^2(\boldsymbol{\theta})}
		\end{equation}
		
		\item {Total sensitivity index}
		\begin{equation}
			S_{i \mid \boldsymbol{\theta}}^{T}
			= \frac{\sum_{\substack{\boldsymbol{\alpha}_L \in \mathcal{A}^{M,p} \\ \alpha_i \neq 0}} 
				c_{\boldsymbol{\alpha}_L}^2(\boldsymbol{\theta})}
			{\sum_{\substack{\boldsymbol{\alpha}_L \in \mathcal{A}^{M,p} \\ \boldsymbol{\alpha}_L \neq \boldsymbol{0}}} 
				c_{\boldsymbol{\alpha}_L}^2(\boldsymbol{\theta})}
			\label{sensi}
		\end{equation}
	\end{enumerate}
\end{lemma}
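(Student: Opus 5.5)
The plan is to fix an arbitrary epistemic realization $\boldsymbol{\theta}$ and treat $\mathcal{M}^{PC}(\boldsymbol{\xi}\mid\boldsymbol{\theta})$ as a finite polynomial chaos expansion in $\boldsymbol{\xi}$ alone. By Lemma~\ref{lem:ortho}, its coefficients are the numbers $c_{\boldsymbol{\alpha}_L}(\boldsymbol{\theta})$ and its basis $\{\Psi_{\boldsymbol{\alpha}_L}\}$ is orthonormal in $L^2$ of the law of $\boldsymbol{\xi}$, which does not depend on $\boldsymbol{\theta}$ because of the probability integral transform. Once this is set up, the lemma is the classical Sudret-type argument from Section~\ref{sec2}, applied pointwise in $\boldsymbol{\theta}$.

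\textbf{Step 1: partition and regroup.} Every $\boldsymbol{\alpha}_L$ that occurs in the conditional expansion has a unique support set $\mathbf{u}=\{k:\alpha_k\neq 0\}$. The sets $\mathcal{A}^{M,p}_{\mathbf{u},L}$, for $\mathbf{u}\neq\emptyset$, together with $\{\boldsymbol{0}\}$, therefore partition the indices. Regrouping the finite sum by support gives the stated decomposition, with $\mathcal{M}^{PC}_{\boldsymbol{0}}(\boldsymbol{\theta})=c_{\boldsymbol{0}}(\boldsymbol{\theta})$. Since $\phi_0\equiv 1$, each $\Psi_{\boldsymbol{\alpha}_L}$ with support $\mathbf{u}$ depends only on $\boldsymbol{\xi}_{\mathbf{u}}$, so $\mathcal{M}^{PC}_{\mathbf{u}}$ is a function of $\boldsymbol{\xi}_{\mathbf{u}}$ only.

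\textbf{Step 2: verify the Sobol'--Hoeffding conditions, which gives uniqueness.} For $\mathbf{u}\neq\emptyset$ and $i\in\mathbf{u}$, I would show that $\int \mathcal{M}^{PC}_{\mathbf{u}}\, f_{\xi_i}\,d\xi_i=0$. This holds because each term contains a factor $\phi_{\alpha_i}(\xi_i)$ with $\alpha_i\ge 1$, and that factor is orthogonal to $\phi_0=1$. The zero-mean property in every argument is exactly the characterization of the unique Sobol' decomposition, since the decomposition is determined recursively by conditional expectations. Uniqueness then follows from the standard Sobol' theorem, because the $\xi_k$ are independent. This needs independence of the $\xi_k$ from each other and from $\boldsymbol{\theta}$, which the setup supplies.

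\textbf{Step 3: variances and indices.} Using the orthonormality from Lemma~\ref{lem:ortho}, I would compute
\[
\mathrm{Var}[\mathcal{M}^{PC}_{\mathbf{u}}\mid\boldsymbol{\theta}]=\sum_{\boldsymbol{\alpha}_L\in\mathcal{A}^{M,p}_{\mathbf{u},L}}c^2_{\boldsymbol{\alpha}_L}(\boldsymbol{\theta}).
\]
The components for different $\mathbf{u}$ are orthogonal, so the total conditional variance equals~\eqref{SCvar}. Dividing the first by the second gives the interaction index, and taking $\mathbf{u}=\{i\}$ gives the first-order index. For the total index, I would use the definition $S^T_{i\mid\boldsymbol{\theta}}=\sum_{\mathbf{u}\ni i}S_{\mathbf{u}\mid\boldsymbol{\theta}}$. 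The union of $\mathcal{A}_{\mathbf{u},L}$ over all $\mathbf{u}\ni i$ is exactly the set $\{\boldsymbol{\alpha}_L:\alpha_i\neq 0\}$, which yields~\eqref{sensi}.

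\textbf{Main obstacle.} The step needing care is that all the ratios require $\mathrm{Var}[\mathcal{M}^{PC}\mid\boldsymbol{\theta}]>0$. The indices are defined only at $\boldsymbol{\theta}$ where not all $c_{\boldsymbol{\alpha}_L}(\boldsymbol{\theta})$ with $\boldsymbol{\alpha}_L\neq\boldsymbol{0}$ vanish. Since these are polynomials in $\boldsymbol{\theta}$, that set is either all of $\Theta$ or its complement is a null algebraic set. I would state this restriction explicitly.

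A second subtlety is that the denominator's index set is written as $\mathcal{A}^{M,p}$ over $\boldsymbol{\alpha}_L$. I would interpret it as the projection $\{\boldsymbol{\alpha}_L:\exists\boldsymbol{\alpha}_K,\ (\boldsymbol{\alpha}_K,\boldsymbol{\alpha}_L)\in\mathcal{A}^{M,p}\}$, so that it matches the union of the sets $\mathcal{A}^{M,p}_{\mathbf{u},L}$.
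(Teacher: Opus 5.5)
Your proposal is correct and follows the same route as the paper's proof: regroup the conditional expansion by the support of $\boldsymbol{\alpha}_L$, identify it with the unique Sobol' decomposition, and read off the partial and total conditional variances from orthonormality. It also supplies details the paper only asserts, namely the zero-mean check that justifies invoking uniqueness and the need for a positive conditional variance; one small correction to that last remark is that if every $c_{\boldsymbol{\alpha}_L}$ with $\boldsymbol{\alpha}_L\neq\boldsymbol{0}$ vanishes identically, the set where the indices are defined is empty, so the right dichotomy is ``empty, or with complement a null algebraic set'' rather than ``all of $\Theta$, or with complement a null algebraic set.''
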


\begin{proof}
	The result follows from the uniqueness of the Sobol' decomposition and the orthogonality of the basis functions $\Psi_{\boldsymbol{\alpha}_L}$ established in Lemma~\ref{lem:ortho}. By partitioning the conditional terms into the index sets $\mathcal{A}_{\mathbf{u}, L}$, the model response is decomposed into mutually orthogonal components. The conditional variance contributions are then obtained as sums of the squared parametric coefficients $c_{\boldsymbol{\alpha}_L}(\boldsymbol{\theta})$. Normalization by the total conditional variance (Eq.~\ref{SCvar}) yields the analytical sensitivity indices.
\end{proof}

\subsection{Isoprobabilistic Transformation and RJMCMC-PCE Implementation}
\label{subsec:algorithmic_implementation}

To circumvent the intractable computational bottleneck imposed by traditional double-loop nested Monte Carlo simulations or repetitive surrogate modeling, we propose a fully vectorized, single-loop algorithmic architecture. This framework evaluates the computationally expensive physics model strictly in the physical parameter space, while conducting the Bayesian adaptive Polynomial Chaos Expansion (PCE) and subsequent variance decomposition entirely within a decoupled, orthogonal latent space. The implementation consists of five sequential stages:

\textbf{1. Joint Sampling in the Latent Space.} 
To ensure optimal space-filling properties and avoid spurious correlations, a joint sampling strategy is executed in a standard uniform latent space. Let $N$ be the total computational budget. A Latin Hypercube Sampling (LHS) design is utilized to generate an $N \times (d_\xi + d_\theta)$ sample matrix $\mathbf{U}$ drawn from an independent standard uniform distribution $\mathcal{U}(0,1)^{d_\xi + d_\theta}$. This matrix is structurally partitioned into two decoupled blocks: the latent aleatory subset $\mathbf{U}_\xi \in \mathbb{R}^{N \times d_\xi}$ and the latent epistemic subset $\mathbf{U}_\Theta \in \mathbb{R}^{N \times d_\theta}$.

\textbf{2. Inverse Isoprobabilistic Mapping.} 
The latent uniform samples are deterministically mapped to the physical domain, formulating the physical aleatory matrix $\mathbf{X}$ and epistemic matrix $\mathbf{\Theta}$. Depending on the complexity of the uncertainty models, the mapping paradigms include:

\textit{Independent Marginals:} For variables with arbitrary independent distributions, the physical column vector $\mathbf{x}_i \in \mathbb{R}^{N \times 1}$ is obtained directly via the element-wise inverse Cumulative Distribution Function (CDF) applied to the corresponding latent column $\mathbf{u}_{\xi, i}$:
\begin{equation}
	\mathbf{x}_i = F_{X_i}^{-1}(\mathbf{u}_{\xi, i}).
\end{equation}

\textit{Hierarchical/Conditioned Variables:} For aleatory variables whose distribution bounds or moments are governed by epistemic parameters (e.g., forming a probability box), the mapping is explicitly conditioned on the mapped epistemic realization $\mathbf{\Theta}$:
\begin{equation}
	\mathbf{x}_i = F_{X_i|\Theta}^{-1}(\mathbf{u}_{\xi, i} \mid \mathbf{\Theta}).
\end{equation}

\textit{Correlated Multivariate Distributions:} To embed complex dependence structures (e.g., via a Gaussian copula with correlation matrix $\mathbf{R}$), an isoprobabilistic transform (such as the Nataf or Rosenblatt transformation) is applied to project the correlated physical variables back into the independent latent uniform space. This ensures the strict mutual independence required for the orthogonality of the PCE basis functions.

\textbf{3. Vectorized System Evaluation.} 
The fully formulated physical sample matrix $\mathbf{X}$ and the epistemic parameters $\mathbf{\Theta}$ are fed into the deterministic computational model $\mathcal{M}$ to yield the system response vector $\mathbf{Y} \in \mathbb{R}^{N \times 1}$:
\begin{equation}
	\mathbf{Y} = \mathcal{M}(\mathbf{X}, \mathbf{\Theta}).
\end{equation}
Because the input matrix is generated via joint LHS and mapped via explicit operations, this stage is entirely free of nested loops, enabling extreme parallelization.

\textbf{4. Bayesian Adaptive Sparse PCE via RJMCMC.} 
Once the system responses are obtained, they are standardized to mitigate scaling issues. A single, global sparse PCE model is then constructed over the joint latent space $\mathbf{U} = [\mathbf{U}_\xi, \mathbf{U}_\Theta]$ using a Reversible Jump Markov Chain Monte Carlo (RJMCMC) algorithm. 

Unlike traditional regression methods that suffer from overfitting and the curse of dimensionality, RJMCMC treats the number of basis functions (model dimension) and their corresponding multi-indices as unknown random variables. At each MCMC iteration, the algorithm proposes trans-dimensional moves—\textit{Birth} (adding a basis function), \textit{Death} (removing one), or \textit{Mutate} (altering the degree or variable partition of an existing basis). The acceptance of these moves is governed by the Metropolis-Hastings ratio, incorporating a Laplace-approximated marginal likelihood and a modified g-prior (or ridge prior) that penalizes overly complex high-order interactions. This dynamic exploration yields a posterior ensemble of sparse polynomial bases and their corresponding coefficients, denoted as $\{ \mathbf{\alpha}^{(s)}, \mathbf{\beta}^{(s)} \}_{s=1}^{N_s}$, where $N_s$ is the number of retained MCMC samples after the burn-in period.

\textbf{5. Analytical Extraction of Conditional Sensitivities.} 
Instead of predicting outputs for new epistemic conditions, the decoupling of uncertainties is transformed into a purely algebraic post-processing step. For any specified epistemic condition $\mathbf{\theta}^*$, the multi-dimensional Legendre basis functions corresponding to the epistemic dimensions are evaluated at the mapped latent point $\mathbf{u}_{\Theta}^*$. By projecting these fixed evaluations onto the remaining aleatory bases across the posterior MCMC samples, closed-form expressions for the conditional variance $\mathbb{V}[\mathbf{Y} \mid \mathbf{\theta}^*]$ and conditional Sobol' indices $S_{i|\mathbf{\theta}^*}$ are derived. The final statistical inference is achieved by averaging these analytical measures across the $N_s$ posterior samples, providing both robust point estimates and their associated Bayesian credible intervals without requiring a single additional call to the physical model. The conditional sensitivity extraction process is summarized in Algorithm~\ref{alg}. 

\begin{algorithm}[H]
	\label{alg}
	\DontPrintSemicolon
	\KwIn{Posterior ensemble $\{ \boldsymbol{\alpha}^{(s)}, \boldsymbol{\beta}^{(s)} \}_{s=1}^{N_s}$, epistemic condition $\boldsymbol{\theta}^*$, aleatory index set $\mathcal{A}_L$}
	\KwOut{Conditional sensitivities ${S}_{u|\boldsymbol{\theta}^*}$ and 95\% credible intervals}
	
	\For{$s = 1$ \textbf{to} $N_s$}{
		1. \textbf{Basis Evaluation:} Compute conditional basis projection for the epistemic part:\\
		$\Psi_{\boldsymbol{\alpha}_K}^{(s)}(\boldsymbol{\theta}^*) \leftarrow \prod_{k \in \mathcal{A}_K} \phi_{\alpha_k}(\theta^*_k)$ \;
		
		2. \textbf{Coefficient Mapping:} Extract conditional aleatory coefficients:\\
		$c_{\boldsymbol{\alpha}_L}^{(s)}(\boldsymbol{\theta}^*) \leftarrow \sum_{\boldsymbol{\alpha}_K} \beta_{(\boldsymbol{\alpha}_K, \boldsymbol{\alpha}_L)}^{(s)} \Psi_{\boldsymbol{\alpha}_K}^{(s)}(\boldsymbol{\theta}^*)$ \;
		
		3. \textbf{Variance Decomposition:} Compute conditional partial variances:\\
		$V_{|\boldsymbol{\theta}^*}^{(s)} = \sum_{\boldsymbol{\alpha}_L \neq \mathbf{0}} (c_{\boldsymbol{\alpha}_L}^{(s)}(\boldsymbol{\theta}^*))^2$ \;
		$V_{u|\boldsymbol{\theta}^*}^{(s)} = \sum_{\boldsymbol{\alpha}_L \in \mathcal{A}_{u,L}} (c_{\boldsymbol{\alpha}_L}^{(s)}(\boldsymbol{\theta}^*))^2$ \;
		
		4. \textbf{Sensitivity Extraction:} Calculate conditional indices:\\
		$S_{u|\boldsymbol{\theta}^*}^{(s)} = V_{u|\boldsymbol{\theta}^*}^{(s)} / V_{|\boldsymbol{\theta}^*}^{(s)}$ \;
	}
	5. \textbf{Posterior Inference:} Aggregate results across the ensemble:\\
	$\bar{S}_{u|\boldsymbol{\theta}^*} = \frac{1}{N_s} \sum_{s=1}^{N_s} S_{u|\boldsymbol{\theta}^*}^{(s)}$ \;
	Compute 2.5\% and 97.5\% quantiles of $\{S_{u|\boldsymbol{\theta}^*}^{(s)}\}_{s=1}^{N_s}$ for credible intervals.
	\caption{Analytical Extraction of Conditional Sobol' Indices via RJPCE}
\end{algorithm}

\section{Cases}
\label{sec4}
\subsection{Analytical benchmark for hybrid uncertainty}
To demonstrate the capability of the proposed Bayesian adaptive conditional PCE framework in decoupling mixed uncertainties, we construct a synthetic parametric stochastic model. This model serves as a rigorous analytical benchmark to verify how the single-surrogate framework accurately extracts continuous, epistemic-resolved sensitivity maps without resorting to nested sampling.

Consider a physical quantity $Y$ driven by two independent aleatory variables $X_1$ and $X_2$, whose distribution bounds are governed by two independent epistemic parameters $a$ and $b$. The system response is explicitly defined by a nonlinear polynomial model:
\begin{equation}
	Y(X_1, X_2) = X_1^2 + X_2^2 + 0.5 X_1 X_2
\end{equation}
where the epistemic parameters are uniformly distributed as $\boldsymbol{\theta} = (a, b) \in [1, 3]^2$. Conditioned on a specific epistemic realization $\boldsymbol{\theta}$, the aleatory variables follow uniform distributions bounded by these parameters: $X_1 \mid a \sim \mathcal{U}[-a, a]$ and $X_2 \mid b \sim \mathcal{U}[-b, b]$. 

Given the independence and zero-mean properties of the conditional uniform distributions, the analytical conditional mean and partial variances admit exact closed-form expressions:
\begin{equation}
	\mathbb{E}[Y \mid \boldsymbol{\theta}] = \frac{a^2 + b^2}{3}
\end{equation}
\begin{equation}
	V_1(\boldsymbol{\theta}) = \frac{4a^4}{45}, \quad
	V_2(\boldsymbol{\theta}) = \frac{4b^4}{45}, \quad
	V_{12}(\boldsymbol{\theta}) = \frac{a^2 b^2}{36}
\end{equation}

The analytical conditional variance is exactly the sum of these partial variances:
\begin{equation}
	\text{Var}[Y \mid \boldsymbol{\theta}] = V_1(\boldsymbol{\theta}) + V_2(\boldsymbol{\theta}) + V_{12}(\boldsymbol{\theta})
\end{equation}

Based on this decomposition, the analytical total-effect conditional Sobol' indices for the aleatory variables $X_1$ and $X_2$ can be formulated as:
\begin{equation}
	S_{T, X_1 \mid \boldsymbol{\theta}} = \frac{V_1(\boldsymbol{\theta}) + V_{12}(\boldsymbol{\theta})}{\text{Var}[Y \mid \boldsymbol{\theta}]}, \quad
	S_{T, X_2 \mid \boldsymbol{\theta}} = \frac{V_2(\boldsymbol{\theta}) + V_{12}(\boldsymbol{\theta})}{\text{Var}[Y \mid \boldsymbol{\theta}]}
\end{equation}
These exact analytical surfaces over the epistemic domain $(a, b)$ provide a rigorous ground truth to assess the accuracy of the proposed RJMCMC-driven conditional PCE extraction.

To evaluate the performance of the proposed analytical conditional framework, a joint Latin Hypercube Sample (LHS) of size $N=5{,}000$ is generated in the 4-dimensional unified latent space $\boldsymbol{U} = (\xi_1, \xi_2, \theta_1, \theta_2) \in [0, 1]^4$. The proposed algorithm constructs a single, unified sparse PCE model over this augmented space using RJMCMC Bayesian learning.

To first quantify the accuracy at the extreme boundaries of the epistemic space, the first-order conditional Sobol' indices are evaluated at the four extreme corners: $\{\min(a), \min(b)\}$, $\{\max(a), \min(b)\}$, $\{\min(a), \max(b)\}$, and $\{\max(a), \max(b)\}$. As depicted in Fig.~\ref{fig:corner_bar}, the stacked bar charts generated from the RJPCE extraction match the analytical truth almost identically. The relative contribution of each aleatory variable shifts dynamically in response to the fixed epistemic bounds: at $\{\max(a), \min(b)\}$, $X_1$ dominates with a sensitivity index of nearly $0.95$, whereas at $\{\min(a), \max(b)\}$, $X_2$ becomes the primary driver. At the symmetric corners, their effects are perfectly balanced, a behavior that the proposed algorithm captures seamlessly through purely algebraic post-processing.

\begin{figure}[htbp]
	\centering
	\includegraphics{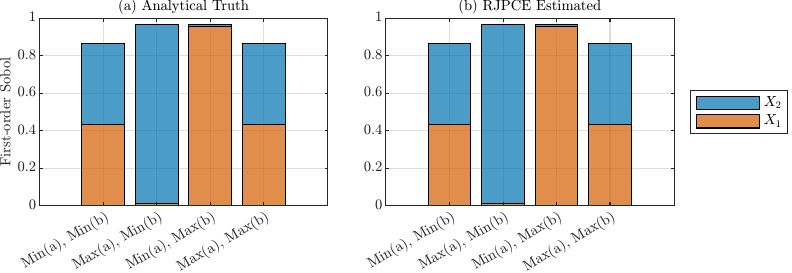}
	\caption{Comparison of first-order conditional Sobol' indices at the four extreme corners of the epistemic parameter space.}
	\label{fig:corner_bar}
\end{figure}

Subsequently, the continuous conditional statistical moments are extracted across a $30 \times 30$ grid within the physical epistemic space. The reconstructed conditional mean and variance, along with their corresponding absolute error fields, are illustrated in Fig.~\ref{fig:cond_moments}. As observed, the conditional mean scales from approximately $0.6$ to $6$, and the variance reaches up to $15$ at the upper bounds of the epistemic space. The RJPCE framework perfectly captures the quadratic dependency of the mean and the quartic dependency of the variance. More importantly, the absolute error fields (right column) confirm the exceptional accuracy of the extraction, with peak errors bounded strictly around $5 \times 10^{-7}$ for the mean and $1.2 \times 10^{-6}$ for the variance.

\begin{figure}[htbp]
	\centering
	\includegraphics{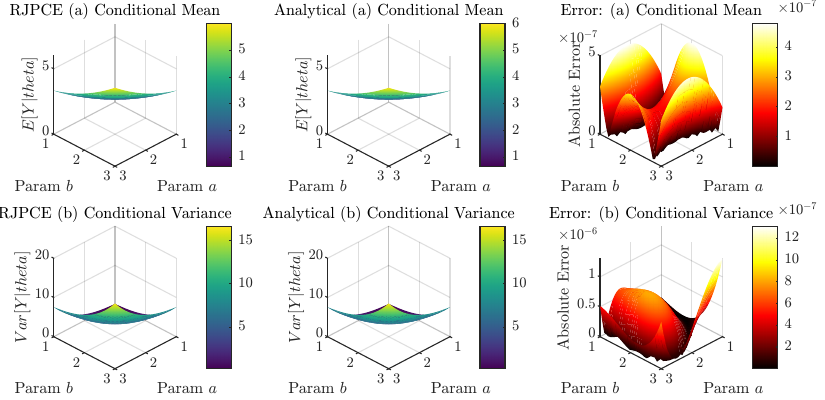}
	\caption{Comparison of the conditional statistical moments over the epistemic domain $(a, b)$. Top row: Conditional Mean $\mathbb{E}[Y \mid \boldsymbol{\theta}]$. Bottom row: Conditional Variance $\text{Var}[Y \mid \boldsymbol{\theta}]$.}
	\label{fig:cond_moments}
\end{figure}

Furthermore, the extracted total-effect conditional Sobol' indices for both $X_1$ and $X_2$ are compared against the exact analytical maps in Fig.~\ref{fig:cond_sobol}. The proposed method produces smooth, spatially coherent sensitivity fields. As physically expected, $S_{T, X_1 \mid \boldsymbol{\theta}}$ approaches $1.0$ when parameter $a$ is maximized and $b$ is minimized, and conversely for $X_2$. The extraction filters out point-wise numerical noise completely, maintaining structural continuity across the entire domain with maximum absolute errors tightly constrained below the $10^{-6}$ magnitude level.

\begin{figure}[htbp]
	\centering
	\includegraphics{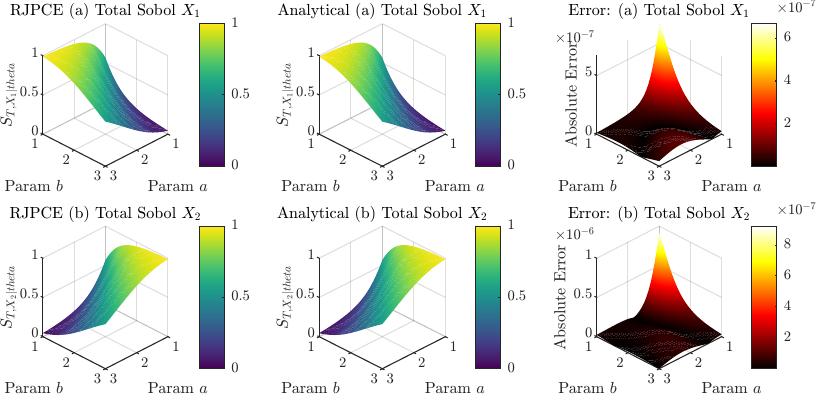}
	\caption{Spatial distributions of the conditional total-effect Sobol' indices over the epistemic domain $(a, b)$. Top row: $S_{T, X_1 \mid \boldsymbol{\theta}}$. Bottom row: $S_{T, X_2 \mid \boldsymbol{\theta}}$.}
	\label{fig:cond_sobol}
\end{figure}

Beyond point estimates, a major advantage of the fully Bayesian RJPCE framework is its ability to naturally quantify the epistemic uncertainty of the surrogate model itself. By evaluating the Sobol' indices across the retained MCMC posterior samples, the 95\% credible intervals (CIs) of the sensitivity measures are analytically extracted. To effectively visualize these bounds—which are otherwise extremely narrow due to the high accuracy of the surrogate—1D sensitivity slices are taken through the center of the epistemic domain (i.e., varying $a$ with fixed $b$, and vice versa). As depicted in Fig.~\ref{fig:cond_sobol_ci}, the widths of the 95\% CIs are scaled by a factor of $1{,}000$ and plotted as shaded bands centered around the RJPCE posterior mean. The results demonstrate an excellent coverage property: the exact analytical solutions (dashed lines) are perfectly enveloped by the credible intervals. This confirms that the proposed framework not only provides high-fidelity point estimates but also rigorously quantifies and bounds its own approximation uncertainty.

\begin{figure}[htbp]
	\centering
	\includegraphics{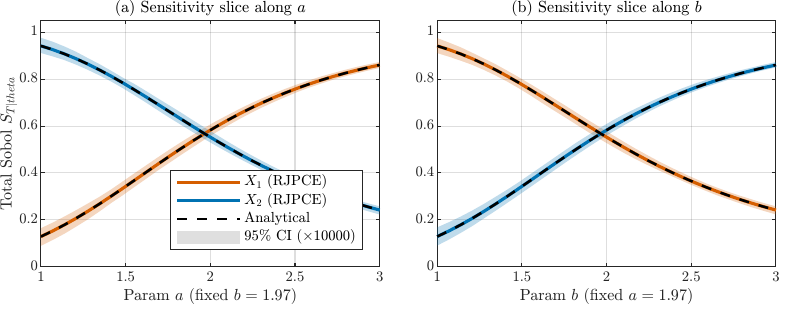}
	\caption{1D sensitivity slices of the total-effect Sobol' indices with 95\% Bayesian credible intervals. The widths of the credible intervals are scaled by a factor of $10{,}000$ for visualization. Left: Varying $a$ with fixed $b=2.0$. Right: Varying $b$ with fixed $a=2.0$.}
	\label{fig:cond_sobol_ci}
\end{figure}

In conclusion, relying on a single set of $5{,}000$ model evaluations, the proposed method successfully constructs a unified representation that not only circumvents the computational bottleneck of double-loop sampling but also ensures physical consistency and high numerical robustness across the varying epistemic conditions.

\subsection{Engineering application: Borehole model with hyperparameter-driven hybrid uncertainty}
To demonstrate the scalability and robustness of the proposed Bayesian adaptive conditional PCE framework in high-dimensional engineering applications, we investigate the classic Borehole groundwater flow model. This model simulates water flow through a borehole connecting two aquifers and is widely utilized as a benchmark for complex uncertainty quantification tasks. The physical schematic of the system is illustrated in Fig.~\ref{fig:bh_schematic}.

\begin{figure}[htbp]
	\centering
	\includegraphics{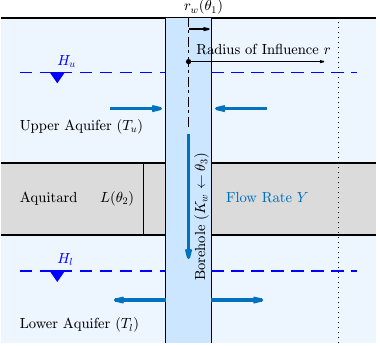}
	\caption{Physical schematic of the Borehole groundwater flow model, illustrating the mapping of hyperparameter-driven epistemic uncertainties to the spatial dimensions and conductivities.}
	\label{fig:bh_schematic}
\end{figure}

We define three epistemic parameters $\boldsymbol{\theta} = (\theta_1, \theta_2, \theta_3)$ governing the spread (half-width) of key aleatory variables. The corresponding uncertainty specification is summarized in Table~\ref{tab:borehole_uncertainty}.

\begin{table}[htbp]
	\centering
	\caption{Hyperparameter-driven hybrid uncertainty specification for the Borehole model (consistent with numerical implementation).}
	\label{tab:borehole_uncertainty}
	\begin{tabular}{lll}
		\hline
		\textbf{Quantity} & \textbf{Distribution} & \textbf{Epistemic control / Parameters} \\
		\hline
		
		Borehole radius $r_w$ 
		& $\mathcal{U}\big[0.10 - \theta_1,\; 0.10 + \theta_1\big]$ 
		& $\theta_1 \in [0.005, 0.05]$ \\
		
		Borehole length $L$ 
		& $\mathcal{U}\big[1400 - \theta_2,\; 1400 + \theta_2\big]$ 
		& $\theta_2 \in [100, 600]$ \\
		
		Hydraulic conductivity $K_w$ 
		& $\mathcal{U}\big[10950 - \theta_3,\; 10950 + \theta_3\big]$ 
		& $\theta_3 \in [500, 4000]$ \\
		
		Upper aquifer head $H_u$ 
		& $\mathcal{U}[950,\;1150]$ 
		& -- \\
		
		Lower aquifer head $H_l$ 
		& $\mathcal{U}[730,\;790]$ 
		&--\\
		
		Transmissivity $T_u$ 
		& deterministic (constant) 
		& $T_u = 89335$ \\
		
		Transmissivity $T_l$ 
		& deterministic (constant) 
		& $T_l = 89.5$ \\
		
		Radius of influence $r$ 
		& deterministic (constant) 
		& $r = 25050$ \\
		
		\hline
	\end{tabular}
\end{table}

The steady-state water flow rate $Y$ is given by:
\begin{equation}
	Y = \frac{2\pi T_u (H_u - H_l)}{\ln(r/r_w) \left[ 1 + \frac{2 L T_u}{\ln(r/r_w) r_w^2 K_w} + \frac{T_u}{T_l} \right]}
\end{equation}

Extracting continuous sensitivity maps for this complex hyperparameter-induced problem using traditional nested Double-Loop Monte Carlo (DLMC) is computationally prohibitive. To achieve a high-resolution evaluation across a $30 \times 30$ epistemic grid with $50{,}000$ MC samples per grid point, DLMC requires $4.5 \times 10^7$ physical model evaluations. In stark contrast, the proposed RJPCE framework constructs a global surrogate over the augmented 8-dimensional latent space using merely $N=5{,}000$ initial LHS samples. This yields an exceptional acceleration ratio of $9{,}000\times$ while maintaining rigorous physical fidelity. 

The reconstructed conditional statistical moments over the epistemic domain $(\theta_1, \theta_2)$ are compared against the high-fidelity DLMC reference in Fig.~\ref{fig:bh_moments}. As the dispersion parameters $\theta_1$ and $\theta_2$ increase, the conditional variance inherently expands. The absolute relative errors are uniformly suppressed to the magnitude of $10^{-3}$, verifying that the global sparse surrogate preserves high accuracy across varying distribution limits without localized retraining.

\begin{figure}[htbp]
	\centering
	\includegraphics{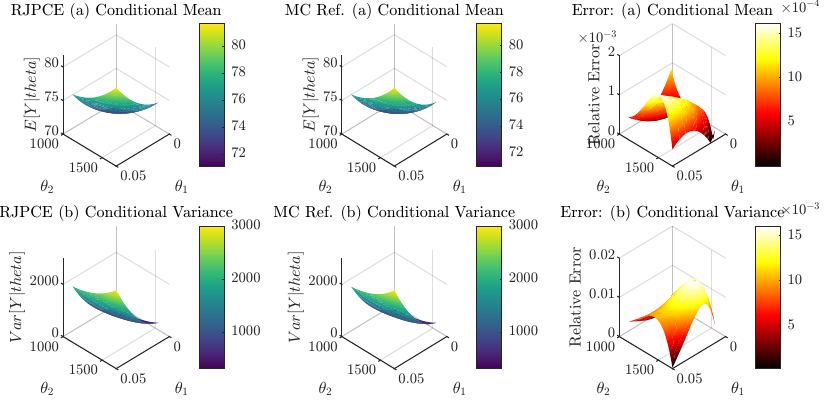}
	\caption{Comparison of the conditional statistical moments for the dynamic Borehole model over the epistemic domain $(\theta_1, \theta_2)$ with $\theta_3$ fixed at its mean value.}
	\label{fig:bh_moments}
\end{figure}

A defining advantage of the fully Bayesian RJPCE framework is its innate capacity to capture complex non-linear variance interactions while quantifying its own surrogate-induced uncertainties. Fig.~\ref{fig:bh_sens_matrix} presents the 1D continuous evolution of the total-effect Sobol' indices for all five aleatory variables. To visualize the algorithm's confidence, the 95\% Bayesian credible intervals (CIs) extracted from the RJMCMC posterior samples are visually enlarged by a factor of $10$ and plotted as shaded bands.

\begin{figure}[htb]
	\centering
	\includegraphics{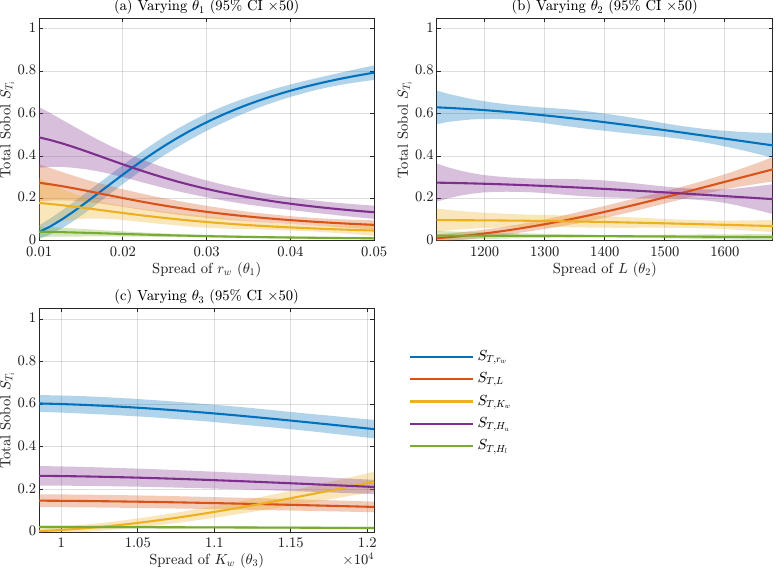}
	\caption{Evolution of conditional total-effect Sobol' indices driven by epistemic hyperparameters. Solid lines indicate the RJPCE posterior means, while shaded regions represent the visually enlarged ($\times 10$) 95\% Bayesian credible intervals. The subplots demonstrate the dynamic variance competition among aleatory variables.}
	\label{fig:bh_sens_matrix}
\end{figure}

The sensitivity evolution in Fig.~\ref{fig:bh_sens_matrix} effectively reveals the underlying ``variance competition'' mechanism of the physical system. Taking Fig.~\ref{fig:bh_sens_matrix}(a) as an example, as $\theta_1$ increases, the variance injected into $r_w$ grows quadratically. Consequently, $S_{T, r_w}$ (blue line) surges non-linearly to dominate the system, which mathematically forces the relative sensitivity contributions of other variables (such as $H_u$ and $L$) to be severely suppressed. Furthermore, the hierarchical separation of $H_u$ (purple line) and $H_l$ (green line) perfectly reflects the asymmetrical bounds assigned to them (width 200 vs. width 60), demonstrating the surrogate's precise resolution in decoupling mixed uncertainties. The tight wrapping of the credible intervals confirms the high robustness of the RJPCE estimation across the entire hybrid parameter space.

\section{Conclusion}
In this paper, a unified Bayesian framework was developed for the analytical extraction of continuous conditional Sobol' indices from a global Polynomial Chaos Expansion (PCE) under hybrid uncertainty. By explicitly decoupling aleatory and epistemic uncertainties via an inverse probability integral transform, the proposed approach preserves both the physical interpretability of the parameters and the mathematical elegance of the tensor-product orthogonal basis.

(1) We formally established that a global PCE surrogate can be analytically factorized. The influence of epistemic variables is entirely encapsulated into parametric coefficient fields, enabling the direct, closed-form derivation of conditional statistical moments and Sobol' indices. This completely bypasses the need for computationally prohibitive double-loop sampling or localized surrogate retraining.
	
(2) By embedding the analytical decomposition within an RJMCMC-driven sparse PCE framework, the method effectively mitigates the curse of dimensionality. Moreover, it simultaneously performs adaptive basis selection and parameter inference, naturally yielding full posterior distributions and rigorous Bayesian credible intervals for the extracted sensitivity measures.
	
(3) Numerical validations, including an exact analytical benchmark and a high-dimensional Borehole model, demonstrated the accuracy and scalability of the proposed approach. The framework not only verified the strict coverage of the Bayesian credible intervals against exact analytical solutions, but also yielded smooth, physically consistent sensitivity maps for complex engineering systems. Crucially, it replaces 9000 repeated surrogate constructions with a single global construction, while preserving the accuracy of the variance competition dynamics among the aleatory variables.

\bibliographystyle{unsrt}  
\bibliography{bib}  

\end{document}